\documentclass[12pt]{iopart}

\expandafter\let\csname equation*\endcsname\relax
\expandafter\let\csname endequation*\endcsname\relax
\usepackage{amsmath,amssymb}
\usepackage{graphicx}
\usepackage{booktabs}
\usepackage{bm}
\usepackage[hidelinks]{hyperref}

\newtheorem{theorem}{Theorem}
\newtheorem{proposition}{Proposition}
\newtheorem{lemma}{Lemma}
\newtheorem{corollary}{Corollary}
\newtheorem{definition}{Definition}
\newtheorem{remark}{Remark}
\newenvironment{proof}{\par\noindent\textit{Proof.}\ }{\hfill$\square$\par\medskip}

\newcommand{\C}{\mathcal{C}}
\newcommand{\V}{V}
\newcommand{\E}{E}
\newcommand{\Rtot}{R_{\mathrm{tot}}}
\newcommand{\Phid}{\Phi_{\mathrm{d}}}
\newcommand{\pd}{\partial}
\newcommand{\dd}{\mathrm{d}}
\newcommand{\ket}[1]{\lvert #1\rangle}
\newcommand{\A}{\mathcal{A}}
\DeclareMathOperator{\out}{out}
\DeclareMathOperator{\spec}{spec}

\begin{document}

\title[Time-resolved edge flux on the SU(3) triangle]{Time-Resolved Edge Flux on the
SU(3) Triangle: A Cooperative-Decay Graph Framework with Exact Bateman Solutions}

\author{G O Ariunbold$^{1,2,3,*}$ and V Sivaraman$^{4}$}

\address{$^1$ Department of Physics and Astronomy, Mississippi State University,
Mississippi State, MS 39762, USA}

\address{$^2$ Physics Department, Wesleyan University, Middletown, 06459, CT, USA}

\address{$^3$ Department of Physics and Astronomy, Texas A{\rm $\&$}M University, College Station, TX, USA}

\address{$^4$ Department of Mathematics and Statistics, Mississippi State University,
Mississippi State, MS 39762, USA}

\address{$^*$ Corresponding author}

\ead{ag2372@msstate.edu}

\begin{abstract}
Cooperative emission from three-level ensembles is conventionally diagnosed through the
radiated intensity alone, which cannot say which channels carry the photons at each instant.
We promote every transition of the cooperative-decay graph into a time-dependent edge flux and
show that node-wise conservation of this flux is an exact restatement of the Pauli master
equation. Suppose the graph carries a depth function that decreases by one along every edge.
Then the total cooperative rate is the descent speed of the mean graph depth, and its
energy-weighted form gives the radiated intensity as $I(t)=-\dd\langle E\rangle/\dd t$ for
arbitrary level spacings. The same ordering triangularizes the generator. This delivers the
relaxation spectrum from the state exit rates, together with a generalized Bateman closed form
for every population on an arbitrary finite directed acyclic graph, in which all exit-rate
degeneracies are absorbed into a polynomial-times-exponential recursion.
\end{abstract}

\vspace{2pc}
\noindent{\it Keywords}: Dicke superradiance, two-mode superradiance, three-level atoms, SU(3) symmetry, directed
acyclic graph, Pauli master equation, edge flux, Bateman recursion, Catalan numbers,
spectral degeneracy

\vspace{1pc}
\noindent{\it Mathematics Subject Classification}: 82C31, 05C20, 60J27, 81V80


\maketitle

\section{Introduction}

In 1954 Dicke showed that an ensemble of identical two-level atoms confined to a
sub-wavelength volume cannot radiate independently: the vacuum field mediates a
collective coupling between them. The net macroscopic dipole moment of the atomic system is
zero, and yet the ensemble emits a coherent burst -- superradiance -- whose peak intensity
scales as $N^2$ rather than $N$~\cite{Dicke1954}. The effect was demonstrated experimentally
two decades later~\cite{Skribanowitz1973}. The
quantitative theory rests on the Pauli master equation of Bonifacio, Schwendimann,
and Haake~\cite{BSH1971}, in which the burst is diagnosed through a single aggregate
observable, the radiated intensity $I(t)$, obtained by solving the master equation and
summing the cooperative emission rates over all populated states. Four standard
benchmarks characterize the two-level burst (with $\Gamma$ the single-atom decay
rate): the peak intensity $I_{\max}\propto N^2$; the delay time
$\tau_D\simeq(E_0+\ln N)/(N\Gamma)$, with $E_0\approx0.5772$ the Euler--Mascheroni
constant; the temporal width $\tau_W\propto1/(N\Gamma)$; and the quantum fluctuation
of the delay time, $\sigma_D\simeq\pi/(\sqrt{6}\,N\Gamma)$~\cite{BSH1971,GrossHaroche1982}.
For the two-level ensemble this master equation was solved analytically early on:
Degiorgio and Ghielmetti gave a large-$N$ approximation~\cite{DegiorgioGhielmetti1971}, and
Lee obtained the exact solution -- for complete~\cite{LeeI} and then arbitrary~\cite{LeeII}
initial excitation -- as a time-evolution matrix whose entries are
polynomial-times-exponential in $t$. The two-level problem has attracted renewed analytic
attention. Holzinger and Genes~\cite{HolzingerGenes2025} recast the solution as a finite sum of
residues of a rational generating function whose poles are the Dicke ladder rates; the
degeneracies below the equator are supplied there by second-order poles. A recent compendium
collects the rate-equation, non-Hermitian, combinatorial and quantum-jump routes to the same
answer~\cite{Holzinger2026}. That Bateman-type structure, together with the secular terms forced
by palindromic pole coincidences, is what the present framework carries onto the multi-level
state graph. The same repeated-root mechanism is shown to act on graphs whose interior nodes
have several parents -- where no generating function in a single variable is available.

The master-equation theory extends to multi-level
atoms~\cite{Agarwal1970,Agarwal1973,CKG1975}. For $N$ identical three-level atoms in the
small-sample (Dicke) limit, cooperative decay comes in exactly three dipole-allowed
topologies. In the Ladder (cascade) the atom descends sequentially $1\to2\to3$. In the
$\Lambda$ a common upper level feeds two lower levels. In the Vee two upper levels feed a
common ground level. Their exact numerical solutions have recently been obtained and
studied as a trilogy: the cascade superradiance model~\cite{Ariunbold2022}, the
mode-selective $\Lambda$~\cite{ABmodeselective2025}, and the superradiant synthesis of
the Vee~\cite{ABsynthesis2026}, whose two initially independent sub-ensembles were
observed to merge dynamically into a single macroscopic collective state. The same
master-equation programme has since been carried to four levels: in the fully symmetric SU(4)
representation the occupation-number basis becomes a tetrahedral weight lattice on which six
embedded $\mathfrak{su}(2)$ transition subalgebras act, and all seven dipole-allowed four-level
topologies are reduced to one Pauli-type rate equation~\cite{Lutsukh2026}. The lattice
formulation used below is the SU(3) member of that family, and the graph-theoretic results of
Section~\ref{sec:structure} are stated so that they apply verbatim to the SU(4) tetrahedron and
beyond.

In every one of these treatments, the observable used to characterize the burst is
the radiated intensity $I(t)$: a single scalar curve obtained by solving the
population master equation and summing the $\hbar\omega$-weighted cooperative rate over
every populated state and every emission channel. This aggregate answers \emph{whether}
and \emph{when} a burst occurs, but not \emph{how} it occurs at the level of the
underlying state graph: it cannot say whether, at a given instant, the decaying
population is funneling through one dominant sequence of transitions or spreading
simultaneously across many.

The present paper supplies that missing, edge-resolved description, and does so for
all three three-level topologies at once. Consider the cooperative-decay graph, whose nodes are the joint atomic
population states and whose directed edges are the allowed emission channels, each carrying a
state-dependent, bosonic-enhanced rate. We treat it not as a static object used only to write
down the master equation, but as the support of a literal time-dependent flow. Every edge
carries an instantaneous flux, equal to its rate times the population of its source state.
Conservation of this flux at every node is then shown to be nothing but the master equation
itself, written locally rather than globally. From this starting point we obtain five results. The radiated
intensity is recovered exactly as the energy-weighted sum of all edge fluxes. A depth-descent
identity expresses the total cooperative rate as minus the rate of descent of the population's
mean graph depth, for any uniformly leveled decay graph. The full relaxation spectrum follows
from the state exit rates alone. A generalized Bateman closed form gives every state population
exactly, accommodating arbitrary exit-rate coincidences. Finally, the complete decay pathways
are counted in closed form.

The paper is organized as follows. Section~\ref{sec:framework} introduces the general
time-resolved edge-flux (TREF) framework on the SU(3) triangular lattice, establishes the
conservation, depth-descent, and energy identities, proves that all three three-level
topologies are uniformly leveled, and records their structural fingerprints
(Table~\ref{tab:fingerprints}). Section~\ref{sec:structure} develops the algebraic and
combinatorial consequences of the depth-leveled acyclic structure. These are the
triangularization of the generator, the relaxation spectrum from the diagonal, and the general
Bateman closed form for a finite directed acyclic graph (DAG), together with the polynomial-exponential machinery that removes
any degeneracy restriction. It closes with the pathway counts and the palindrome-forced
spectral degeneracies. Section~\ref{sec:recursion} applies the general recursion to all three
configurations. Section~\ref{sec:conclusions} concludes.

\section{Time-resolved edge flux on the SU(3) triangular lattice}
\label{sec:framework}

\subsection{States, topologies, and the master equation}
\label{sec:states}

Let $N$ identical three-level atoms have internal levels $\ket1,\ket2,\ket3$ with
energies $E_1>E_2>E_3$, all coupled to common field modes (small-sample Dicke limit).
The permutation-symmetric subspace carries the fully symmetric SU(3) representation,
spanned by occupation states $\ket{\bm q}=\ket{q_1,q_2,q_3}$ with $q_1+q_2+q_3=N$,
forming a triangular lattice with
\begin{equation}
D_N=\frac{(N+1)(N+2)}{2}
\end{equation}
sites. The three topologies are the three admissible channel sets
\begin{equation}
\text{Ladder: }\{1{\to}2,\,2{\to}3\};\qquad
\Lambda: \{1{\to}2,\,1{\to}3\};\qquad
\text{Vee: }\{1{\to}3,\,2{\to}3\}.
\label{eq:channelsets}
\end{equation}
In the Born--Markov approximation the coherences decouple from the populations. The
occupation probability $p(\bm q,t)$ then obeys the Pauli master
equation~\cite{Agarwal1973,Ariunbold2022,Lindblad1976,BreuerPetruccione2002}
\begin{equation}
\frac{\pd p(\bm q,t)}{\pd t}
=\sum_{(m\to n)\in\C}\Big[\,\Gamma_{nm}\,q_n(q_m{+}1)\,p(\bm q-\bm e_n+\bm e_m,t)
-\Gamma_{nm}\,q_m(q_n{+}1)\,p(\bm q,t)\,\Big],
\label{eq:master}
\end{equation}
with channel set $\C$, single-atom rates $\Gamma_{nm}$, unit vector $\bm e_i$ in the
$q_i$ direction, and edge rates
\begin{equation}
w_{nm}(\bm q)=\Gamma_{nm}\,q_m(q_n+1)\qquad(q_m\ge1),
\label{eq:edgerate}
\end{equation}
the product of the emitting population $q_m$ and the bosonic enhancement $(q_n+1)$ of
the receiving level. Each channel contributes $N(N+1)/2$ directed edges. For the
Ladder, Eq.~\eqref{eq:master} is the cascade master equation of~\cite{Ariunbold2022} in
occupation coordinates ($q_1=n$, $q_2=m-n$, $q_3=N-m$ in the two-index notation used
there); for the $\Lambda$ and Vee it is the master equation of~\cite{ABmodeselective2025}
and~\cite{ABsynthesis2026}, respectively. We write $\V$ for the set of occupation
states and $\E$ for the set of directed edges $(\bm u\to\bm v)$ with positive rate.

\subsection{The depth function and the edge flux}
\label{sec:depth}

Two concepts underlie the entire framework: a graph-theoretic device that certifies the
decay graph is acyclic and drives the identity of Section~\ref{sec:descent}, and the
central physical quantity of the paper -- the edge flux.

\begin{definition}[Depth function]\label{def:depth}
A depth function on $(\V,\E)$ is a map $\Phid:\V\to\mathbb{Z}_{\ge0}$ that strictly
decreases along every directed edge, $\Phid(\bm v)<\Phid(\bm u)$ whenever
$(\bm u\to\bm v)\in\E$. A decay graph admitting such a function is necessarily acyclic.
The graph is \emph{uniformly leveled} by $\Phid$ if every edge decreases $\Phid$ by
exactly~$1$.
\end{definition}

\begin{definition}[Edge flux]\label{def:flux}
For every directed edge of the decay graph, the instantaneous flux through it is
\begin{equation}
\Phi_{nm}(\bm q;t):=w_{nm}(\bm q)\,p(\bm q,t)\;\ge0,
\label{eq:fluxdef}
\end{equation}
wherever the edge exists, and $0$ otherwise. The decay graph $(\V,\E)$ with rates
$\{w_{nm}\}$, together with the time-dependent flux family
$\{\Phi_{nm}(\cdot;t)\}_{t\ge0}$, constitutes the time-resolved edge-flux network.
\end{definition}

The edge flux turns the cooperative-decay graph from a static wiring diagram into a
literal flow network, one that changes with time as the atomic population redistributes
itself across the state space.

\subsection{Exact conservation and recovery of the total cooperative rate}
\label{sec:conservation}

The first thing to establish is that nothing has been added by this change of viewpoint. The
edge flux is built from quantities the master equation already contains, and the following
proposition says that reassembling them node by node returns the master equation exactly --
the flux picture is a rewriting, not a model.

\begin{proposition}[Exact flux conservation]\label{prop:conservation}
For every topology, every lattice state $\bm q\in\V$, and every $t$,
\begin{equation}
\frac{\dd p(\bm q,t)}{\dd t}
=\underbrace{\sum_{(m\to n)\in\C}\Phi_{nm}(\bm q+\bm e_m-\bm e_n;t)}_{\text{flux in}}
-\underbrace{\sum_{(m\to n)\in\C}\Phi_{nm}(\bm q;t)}_{\text{flux out}}.
\label{eq:conservation}
\end{equation}
The master equation~\eqref{eq:master} is exactly this local conservation law, stated at
every node individually.
\end{proposition}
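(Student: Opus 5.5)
The plan is to compare Eq.~\eqref{eq:conservation} with the master equation~\eqref{eq:master} term by term, one channel $(m\to n)\in\C$ at a time. The whole argument rests on the edge-rate formula~\eqref{eq:edgerate} and the flux definition~\eqref{eq:fluxdef}, including the convention that $\Phi_{nm}$ vanishes where the edge does not exist. I will treat the loss term and the gain term separately, and then deal with the lattice boundary.

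\emph{Loss term.} The loss term of~\eqref{eq:master} for channel $(m\to n)$ is $\Gamma_{nm}q_m(q_n+1)p(\bm q,t)$. If $q_m\ge1$, this equals $w_{nm}(\bm q)p(\bm q,t)=\Phi_{nm}(\bm q;t)$. If $q_m=0$, the term vanishes, and so does $\Phi_{nm}(\bm q;t)$, because no edge leaves $\bm q$ in that channel. Summing over $\C$ therefore gives the ``flux out'' sum exactly.

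\emph{Gain term.} Set $\bm q'=\bm q+\bm e_m-\bm e_n$, the source state obtained by undoing one $m\to n$ emission. Then $q'_m=q_m+1$ and $q'_n+1=q_n$. Hence $w_{nm}(\bm q')=\Gamma_{nm}(q_m+1)q_n$, which is exactly the coefficient multiplying $p(\bm q-\bm e_n+\bm e_m,t)$ in~\eqref{eq:master}. This identifies the gain term with $\Phi_{nm}(\bm q';t)$ whenever $\bm q'\in\V$. In that case $q'_m=q_m+1\ge1$ holds automatically, so the edge $\bm q'\to\bm q$ exists, and the gain term equals the ``flux in'' contribution.

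\emph{Boundary.} The only delicate step is the edge of the triangle. When $q_n=0$, the point $\bm q'$ has $q'_n=-1$ and lies outside $\V$. The master-equation term is still harmless, since its prefactor $q_n$ is zero, and by convention the flux through a nonexistent edge is also $0$. All other components of $\bm q'$ remain nonnegative and sum to $N$, so $q_n=0$ is the only way $\bm q'$ can leave the lattice.

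With both sides agreeing channel by channel and node by node, Eq.~\eqref{eq:master} and Eq.~\eqref{eq:conservation} coincide as identities on $\V$ for all $t$. This gives the stated equivalence in both directions.
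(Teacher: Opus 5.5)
Your proposal is correct and follows the paper's approach: the paper's proof is the one-line ``substitute Definition~\ref{def:flux} into~\eqref{eq:master}; the two sides are identical by construction.'' You carry out that substitution channel by channel, and your explicit treatment of the lattice boundary ($q_m=0$ for the loss term; $q_n=0$, the only way $\bm q+\bm e_m-\bm e_n$ leaves $\V$, for the gain term) fills in details the paper leaves implicit.
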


\begin{proof}
Substitute Definition~\ref{def:flux} into~\eqref{eq:master}; the two sides are identical
by construction.
\end{proof}

This local form is strictly stronger than global probability conservation. The latter
follows by summing~\eqref{eq:conservation} over all nodes and telescoping. The node-by-node
statement, by contrast, records \emph{which} transitions are responsible for each state's
population change at each instant -- information invisible in any aggregate scalar
description.

Summing the fluxes instead of differencing them produces the observables. Two are useful, and
it is worth keeping them apart: one counts transitions, the other counts energy.

\begin{proposition}[Channel-resolved and total cooperative rates]\label{prop:rates}
For each open channel let $R_{nm}(t):=\sum_{\bm q}\Phi_{nm}(\bm q;t)$ be its integrated
flux. The total cooperative (transition-count) rate is
$\Rtot(t)=\sum_{(m\to n)\in\C}R_{nm}(t)$, and the
physical radiated intensity is the photon-energy-weighted flux sum
\begin{equation}
I(t)=\sum_{(m\to n)\in\C}\hbar\omega_{nm}\,R_{nm}(t),\qquad
\hbar\omega_{nm}=E_m-E_n .
\label{eq:intensity}
\end{equation}
\end{proposition}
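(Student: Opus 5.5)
The proposition is mostly definitional, so the plan is to check that the two sums agree with the quantities the physics literature calls the total cooperative rate and the radiated intensity. The starting point is the standard expression used in the Bonifacio--Schwendimann--Haake treatment and its three-level descendants~\cite{BSH1971,Agarwal1973,Ariunbold2022}: the intensity is the expectation, over the population distribution $p(\bm q,t)$, of the photon-energy-weighted emission rate out of each state,
\begin{equation*}
I(t)=\sum_{\bm q\in\V}p(\bm q,t)\sum_{(m\to n)\in\C}\hbar\omega_{nm}\,w_{nm}(\bm q).
\end{equation*}
Here $w_{nm}(\bm q)$ is the rate at which a photon of energy $E_m-E_n$ leaves state $\bm q$ in channel $m\to n$. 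This is exactly the loss term of~\eqref{eq:master}.

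First, I will substitute Definition~\ref{def:flux}, replacing each product $w_{nm}(\bm q)p(\bm q,t)$ by $\Phi_{nm}(\bm q;t)$. Because $\Phi_{nm}$ is set to zero wherever the edge does not exist (that is, where $q_m=0$), the sum over $\bm q$ can be taken over all of $\V$ without any boundary bookkeeping. Since $\V$ and $\C$ are finite, the two sums may be exchanged. Pulling out the channel constant $\hbar\omega_{nm}$ then leaves $\sum_{(m\to n)\in\C}\hbar\omega_{nm}\sum_{\bm q}\Phi_{nm}(\bm q;t)=\sum_{(m\to n)\in\C}\hbar\omega_{nm}R_{nm}(t)$, which is~\eqref{eq:intensity}.

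The same computation with the weight $\hbar\omega_{nm}$ replaced by $1$ gives the expected number of transitions per unit time. That quantity is $\sum_{\bm q}p(\bm q,t)\sum_{\C}w_{nm}(\bm q)$, and after the exchange of sums it equals $\Rtot(t)=\sum_{\C}R_{nm}(t)$. For a single channel, the integrated flux $R_{nm}$ is by construction the rate of $m\to n$ photons. This is the channel-resolved statement. As a consistency check I will also note that, by Proposition~\ref{prop:conservation}, every unit of flux leaving a node along an edge enters its target node. So $\Rtot$ counts each transition exactly once, with no double counting between the in- and out-terms.

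The only real obstacle is interpretive rather than computational. One must confirm that the photon energy attached to each edge is the single-atom level difference $E_m-E_n$, independent of $\bm q$. This holds because each edge $\bm q\to\bm q-\bm e_m+\bm e_n$ lowers the total atomic energy $\sum_i q_iE_i$ by exactly $E_m-E_n$. I will verify this explicitly, since it is what allows $\hbar\omega_{nm}$ to be factored out of the $\bm q$-sum. It also sets up the later identity $I=-\dd\langle E\rangle/\dd t$.
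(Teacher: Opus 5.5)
Your proposal is correct and follows the paper's approach: the paper gives no separate proof and treats the proposition as definitional. It justifies the identification with the physical intensity (Section~\ref{sec:energy}) the way you do, noting that $I(t)=\sum_{\bm q}p(\bm q,t)\sum_{(m\to n)}\hbar\omega_{nm}w_{nm}(\bm q)$ is the ensemble average of Agarwal's per-state radiation rate, which regroups by channel into $\sum_{(m\to n)}\hbar\omega_{nm}R_{nm}(t)$, and your point that each channel-$(m\to n)$ edge lowers $\sum_i E_iq_i$ by the $\bm q$-independent amount $E_m-E_n$ is exactly the ingredient the paper uses next in Theorem~\ref{thm:energy}.
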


\begin{remark}\label{rem:count}
$\Rtot(t)$ counts transitions, not radiated energy: it weights channels with different
photon energies identically, and reduces to a constant multiple of $I(t)$ only in the
equally spaced case. We retain it as a separate observable because it drives the purely
combinatorial depth-descent identity below.
\end{remark}

\subsection{The depth-descent identity}
\label{sec:descent}

So far the flux has only been summed. The graph structure enters when the sum is weighted by
depth: if every edge costs exactly one unit of depth, then each emission moves the population
exactly one level down, and the total emission rate must equal the speed at which the
population descends. That intuition is exact.

\begin{theorem}[Depth-descent identity]\label{thm:descent}
Let the decay graph be uniformly leveled by a depth function $\Phid$
(Definition~\ref{def:depth}), and let $\langle\Phid\rangle(t):=\sum_{\bm q}\Phid(\bm q)\,p(\bm q,t)$
be the population's mean depth. Then
\begin{equation}
\Rtot(t)=-\frac{\dd}{\dd t}\langle\Phid\rangle(t).
\label{eq:descent}
\end{equation}
\end{theorem}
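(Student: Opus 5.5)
The plan is to differentiate the mean depth term by term and substitute the local conservation law of Proposition~\ref{prop:conservation}. This rewrites $\frac{\dd}{\dd t}\langle\Phid\rangle$ as a sum over edges rather than over nodes. Since $\V$ is finite (it has $D_N$ sites), differentiation commutes with the sum, and
\begin{equation*}
\frac{\dd}{\dd t}\langle\Phid\rangle=\sum_{\bm q\in\V}\Phid(\bm q)\Big[\sum_{(m\to n)\in\C}\Phi_{nm}(\bm q+\bm e_m-\bm e_n;t)-\sum_{(m\to n)\in\C}\Phi_{nm}(\bm q;t)\Big].
\end{equation*}

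The key step is to reorganize the double sum edge by edge. Each directed edge $(\bm u\to\bm v)\in\E$ of channel $(m\to n)$ has $\bm u=\bm v+\bm e_m-\bm e_n$ and carries flux $\Phi_{nm}(\bm u;t)$. That flux appears exactly twice in the sum: once as inflow at $\bm v$, weighted by $\Phid(\bm v)$, and once as outflow at $\bm u$, weighted by $-\Phid(\bm u)$.

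Some care is needed at the boundary of the triangle. Terms in the inflow sum whose source $\bm q+\bm e_m-\bm e_n$ lies outside $\V$ vanish, because $q_n=0$ makes the rate factor zero. Outflow terms with $q_m=0$ vanish by the convention in Definition~\ref{def:flux}. So the pairing is an exact bijection between nonzero flux terms and edges of $\E$, which gives
\begin{equation*}
\frac{\dd}{\dd t}\langle\Phid\rangle=\sum_{(\bm u\to\bm v)\in\E}\big[\Phid(\bm v)-\Phid(\bm u)\big]\,\Phi(\bm u\to\bm v;t).
\end{equation*}

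Now invoke uniform leveling. Every edge has $\Phid(\bm v)-\Phid(\bm u)=-1$, so the right-hand side equals $-\sum_{\E}\Phi=-\sum_{(m\to n)\in\C}\sum_{\bm q}\Phi_{nm}(\bm q;t)=-\Rtot(t)$ by Proposition~\ref{prop:rates}. This is exactly \eqref{eq:descent}.

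The only delicate step is the bookkeeping in the reindexing: every inflow term must match a genuine edge, and there must be no double counting when a node has several parents or several channels. I would handle this by writing both sums as a single sum over pairs (channel, source state with $q_m\ge1$) and shifting the index in the inflow sum, $\bm q\mapsto\bm q-\bm e_m+\bm e_n$. The remaining steps are immediate.
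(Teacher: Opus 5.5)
Your proof is correct and takes the same route as the paper. You differentiate $\langle\Phid\rangle$, insert the gain--loss form of the conservation law, reindex each gain term by its edge's source state, and use uniform leveling to replace $\Phid(\bm v)-\Phid(\bm u)$ by $-1$ on every edge, which leaves $-\Rtot(t)$. Your explicit check that inflow terms with a source outside the lattice vanish through the factor $q_n=0$ is bookkeeping the paper leaves implicit, and it does not change the argument.
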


\begin{proof}
Differentiate $\langle\Phid\rangle$ and substitute the gain--loss form
of~\eqref{eq:conservation}:
$\dd\langle\Phid\rangle/\dd t=\sum_{\bm u}\Phid(\bm u)\dot p_{\bm u}
=\sum_{\bm u}\Phid(\bm u)\sum_i[\Phi_i(\mathrm{pred}_i(\bm u);t)-\Phi_i(\bm u;t)]$.
Reindex each gain sum by $\bm u'=\mathrm{pred}_i(\bm u)$; uniform leveling gives
$\Phid(\bm u)=\Phid(\bm u')-1$ along every edge, so the $\Phid$-weighted gain terms
cancel the loss terms up to the extra $-1$, leaving
$\dd\langle\Phid\rangle/\dd t=-\sum_i\sum_{\bm u'}\Phi_i(\bm u';t)=-\Rtot(t)$.
\end{proof}

The identity holds precisely because the depth function decreases by exactly one along
every edge, regardless of channel: it is a structural consequence of uniform leveling,
not of the particular cooperative rate functions.

\subsection{Energy-weighted depth: the exact intensity}
\label{sec:energy}

Replacing the combinatorial depth by a channel-weighted one recovers the physical
intensity exactly.

\begin{theorem}[Energy-weighted descent identity]\label{thm:energy}
Let $\Psi(\bm q):=\sum_i E_i q_i$ be the total excitation energy of state $\bm q$
(relative to the ground level, up to a constant). Every channel-$(m\to n)$ edge
decreases $\Psi$ by exactly $\hbar\omega_{nm}=E_m-E_n$, so the radiated
intensity~\eqref{eq:intensity} satisfies
\begin{equation}
I(t)=-\frac{\dd}{\dd t}\langle\Psi\rangle(t)=-\frac{\dd}{\dd t}\langle E\rangle(t),
\label{eq:energyid}
\end{equation}
for any level spacing.
\end{theorem}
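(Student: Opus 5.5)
The plan is to repeat the argument of Theorem~\ref{thm:descent} with the uniform unit decrement replaced by a channel-dependent decrement. Theorem~\ref{thm:descent} uses uniform leveling only to say that the weight drops by a fixed amount along each edge. Here the drop depends on the channel but not on the state, and that is all the cancellation needs.

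The first step is the pointwise claim. A channel-$(m\to n)$ edge sends $\bm q\mapsto\bm q-\bm e_m+\bm e_n$, so
\[
\Psi(\bm q-\bm e_m+\bm e_n)-\Psi(\bm q)=E_n-E_m=-\hbar\omega_{nm}.
\]
This holds independently of $\bm q$, of $N$, and of the level spacings. Adding a constant to $\Psi$, for example to reference energies to the ground level, changes neither this difference nor $\dd\langle\Psi\rangle/\dd t$, because $\sum_{\bm q}p(\bm q,t)=1$ is conserved. This is why "up to a constant" is harmless. Since $\langle E\rangle(t)$ is by definition $\sum_{\bm q}\Psi(\bm q)p(\bm q,t)$, the second equality in~\eqref{eq:energyid} is immediate.

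The second step differentiates $\langle\Psi\rangle=\sum_{\bm q}\Psi(\bm q)p(\bm q,t)$ and substitutes Proposition~\ref{prop:conservation}. This gives, for each channel $(m\to n)\in\C$, a gain term $\sum_{\bm q}\Psi(\bm q)\Phi_{nm}(\bm q+\bm e_m-\bm e_n;t)$ and a loss term $-\sum_{\bm q}\Psi(\bm q)\Phi_{nm}(\bm q;t)$.

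In the gain term, reindex by the source $\bm u=\bm q+\bm e_m-\bm e_n$. The flux vanishes unless the edge exists, so the sum runs over the same edge set, and no boundary terms arise from states outside the lattice. Then $\Psi(\bm q)=\Psi(\bm u)-\hbar\omega_{nm}$. The $\Psi(\bm u)$ part cancels the loss term exactly, leaving
\[
\frac{\dd\langle\Psi\rangle}{\dd t}=-\sum_{(m\to n)\in\C}\hbar\omega_{nm}\sum_{\bm u}\Phi_{nm}(\bm u;t)=-\sum_{(m\to n)\in\C}\hbar\omega_{nm}R_{nm}(t),
\]
which equals $-I(t)$ by Proposition~\ref{prop:rates}.

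The only point needing care is the reindexing bookkeeping. One must check that the map $\bm q\mapsto\bm q+\bm e_m-\bm e_n$ is a bijection between the targets and the sources of channel-$(m\to n)$ edges. One must also check that $\Phi_{nm}$ vanishes exactly when $q_m=0$, and that the bosonic factor makes the rate consistent with the master equation~\eqref{eq:master}. All sums are finite, since $\V$ has $D_N$ elements, so the interchange of derivative and sum is trivial.
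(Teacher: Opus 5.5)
Your proposal is correct and follows the paper's own proof, which is the depth-descent argument of Theorem~\ref{thm:descent} with the unit decrement replaced by the channel-dependent constant $\hbar\omega_{nm}$. Your explicit treatment of the reindexing and the boundary terms fills in details the paper leaves implicit.
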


\begin{proof}
Identical to the proof of Theorem~\ref{thm:descent}, with each channel-$(m\to n)$ edge
decreasing the weighting function by the channel-dependent constant $\hbar\omega_{nm}$
rather than by $1$.
\end{proof}

Equation~\eqref{eq:energyid} is the exact energy-conservation reading of the radiated
intensity: $-\dd\langle E\rangle/\dd t$ is the rate at which the mean stored atomic
energy decreases, i.e.\ the power radiated into the field. A complementary,
state-resolved cross-check confirms that the formalism reproduces the established
microscopic rate expressions. For every state, the channel-weighted exit rate
$\sum_{(m\to n)}\hbar\omega_{nm}w_{nm}(\bm q)$ agrees term by term with Agarwal's
per-state cascade radiation rate~\cite{Agarwal1973}. Hence
$I(t)=\sum_{\bm q}p(\bm q,t)\sum_{(m\to n)}\hbar\omega_{nm}w_{nm}(\bm q)$ is the exact
ensemble average of Agarwal's expression at every fixed $t$. This is an algebraic identity
requiring no time derivative, and it is distinct from -- and more elementary than -- the
dynamical statement~\eqref{eq:energyid}.

\subsection{All three topologies are uniformly leveled}
\label{sec:leveling}

Both identities were stated conditionally: they hold for graphs that are uniformly leveled.
That condition now has to be checked, and for three-level systems it costs nothing --- every
topology satisfies it, and the depth function can be written down by inspection.

A linear depth function $\Phi_{\bm c}(\bm q)=\sum_i c_i q_i$ decreases by exactly one
along every edge iff $c_m-c_n=1$ for every open channel. With only three levels no
channel set can contain two directed paths of unequal length between the same pair of
levels, so every three-level topology is graded:
\begin{equation}
\bm c_{\text{Ladder}}=(2,1,0),\qquad
\bm c_{\Lambda}=(1,0,0),\qquad
\bm c_{\text{Vee}}=(1,1,0),
\label{eq:weights}
\end{equation}
and the depth-descent identity~\eqref{eq:descent} and the energy
identity~\eqref{eq:energyid} hold verbatim for each. Physically, $\Phi_{\bm c}$ counts
the number of photons the system has yet to emit: $2q_1+q_2$ for the Ladder (each
top-level atom owes two photons), $q_1$ for the $\Lambda$, and $q_1+q_2$ for the Vee
(each excited atom owes one).

\subsection{Structural fingerprints}
\label{sec:fingerprints}

Although the dynamical identities are common, the three graphs differ sharply in
structure; Table~\ref{tab:fingerprints} collects the results, all special cases of the
general theorems of Section~\ref{sec:structure} evaluated on the triangle.

\emph{Stationary manifold.} By triangularity of the generator under depth ordering
(Lemma~\ref{lem:triangular}), the multiplicity of the zero eigenvalue equals the number
of absorbing states -- states in which every emitting level is empty. The Ladder and Vee
funnel into the single vertex $\ket{0,0,N}$; the $\Lambda$ terminates on the entire edge
$q_1=0$, an $(N{+}1)$-fold degenerate dark manifold whose final distribution encodes the
branching history.

\emph{Palindrome degeneracies.} On a boundary edge supporting exactly one active channel
$a\to b$ with level $b$ dark, the exit rate $r=\Gamma_{ba}q_a(N-q_a+1)$ is palindromic
under $q_a\mapsto N+1-q_a$, forcing $\lfloor N/2\rfloor$ exact eigenvalue pairs for all
rate values (Section~\ref{sec:palindrome}). The criterion selects one family for the
Ladder, two for the Vee, and none for the $\Lambda$.

\emph{Pathway counts.} Complete decay pathways from the physical initial states are the
Catalan number $C_N$ for the Ladder (Dyck-path bijection), $2^N$ for the $\Lambda$
($N$ unconstrained binary branch choices), and $\binom{N}{a}$ for the Vee (interleavings
of the two reservoirs' departures from $\ket{a,b,0}$).

\begin{table}[t]
\centering
\caption{The three dipole-allowed three-level topologies as TREF decay graphs on the
SU(3) triangular lattice ($D_N=(N{+}1)(N{+}2)/2$ states; $N(N{+}1)/2$ edges per channel).
All three are uniformly leveled. Initial states: $\ket{N,0,0}$ (Ladder, $\Lambda$),
$\ket{a,b,0}$ with $a+b=N$ (Vee). $C_N$: Catalan number.}
\label{tab:fingerprints}
\setlength{\tabcolsep}{5pt}
\resizebox{\textwidth}{!}{%
\begin{tabular}{@{}llccccc@{}}
\toprule
Topology & Channels & Depth $\bm c$ & $\dim\ker Q$ & Absorbing set & Palindr.\ families & Pathways\\
\midrule
Ladder   & $1{\to}2,\,2{\to}3$ & $(2,1,0)$ & $1$    & vertex $\ket{0,0,N}$ & $1$ & $C_N$\\
$\Lambda$& $1{\to}2,\,1{\to}3$ & $(1,0,0)$ & $N{+}1$ & edge $q_1{=}0$      & $0$ & $2^N$\\
Vee      & $1{\to}3,\,2{\to}3$ & $(1,1,0)$ & $1$    & vertex $\ket{0,0,N}$ & $2$ & $\binom{N}{a}$\\
\bottomrule
\end{tabular}}
\end{table}

\section{Structural and combinatorial properties of the decay graphs}
\label{sec:structure}

The depth function of Definition~\ref{def:depth} was introduced to certify acyclicity
and to drive the depth-descent identity. The same depth-leveled, acyclic structure has
further consequences that are independent of the flux dynamics: it controls the
algebraic structure of the generator itself and the combinatorics of complete decay
pathways. We develop these here in the general finite-DAG setting, then specialize to
the triangle.

\subsection{Depth ordering triangularizes the generator}
\label{sec:triangular}

\begin{definition}[Generator]\label{def:generator}
For $\bm u\in\V$, let $r(\bm u):=\sum_{(m\to n)}w_{nm}(\bm u)$ be the total exit rate.
Collecting the populations into a vector $P(t)$ indexed by $\V$, the master
equation~\eqref{eq:master} is $\dot P(t)=QP(t)$, where $Q_{\bm v\bm u}=w_{nm}(\bm u)$ if
$\bm u\to\bm v$ is the channel-$(m\to n)$ edge, $Q_{\bm v\bm u}=0$ otherwise, and
$Q_{\bm u\bm u}=-r(\bm u)$.
\end{definition}

Written this way the generator looks like a general sparse matrix. It is not: the depth
function orders the states, and in that order no edge can ever point backwards.

\begin{lemma}[Triangularization]\label{lem:triangular}
List the states of $\V$ in order of strictly decreasing depth $\Phid$ (ties broken
arbitrarily). In this ordering $Q$ is lower triangular: every nonzero off-diagonal
entry $Q_{\bm v\bm u}$ lies strictly below the diagonal.
\end{lemma}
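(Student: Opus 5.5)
The argument reads the position of each nonzero off-diagonal entry directly off Definition~\ref{def:generator} and then uses the defining inequality of a depth function (Definition~\ref{def:depth}). Fix an ordering $\bm u_1,\bm u_2,\dots,\bm u_{|\V|}$ of $\V$ with $\Phid(\bm u_i)\ge\Phid(\bm u_j)$ whenever $i<j$, ties broken arbitrarily. In this ordering, the row index of $Q_{\bm v\bm u}$ is the position of the target $\bm v$ and the column index is the position of the source $\bm u$. So ``strictly below the diagonal'' means exactly that the target appears strictly later than the source.

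\textbf{Step 1.} Take any nonzero off-diagonal entry $Q_{\bm v\bm u}$ with $\bm v\neq\bm u$. By Definition~\ref{def:generator}, the only off-diagonal entries that can be nonzero are edge rates $w_{nm}(\bm u)$ for an edge $(\bm u\to\bm v)\in\E$. Hence $(\bm u\to\bm v)$ is an edge of the decay graph.

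\textbf{Step 2.} By Definition~\ref{def:depth}, $\Phid(\bm v)<\Phid(\bm u)$. Uniform leveling is not needed, only strict decrease. If $\bm v$ preceded $\bm u$ in the list, the ordering would force $\Phid(\bm v)\ge\Phid(\bm u)$, a contradiction. The two states are distinct, so $\bm v$ comes strictly after $\bm u$, and the entry lies strictly below the diagonal.

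\textbf{Step 3 (the tie case).} The only point needing care is the arbitrary tie-breaking. Two states with equal depth can never be joined by an edge, since an edge would require strict decrease. So the relative order of tied states never places an edge entry above the diagonal, and each block of equal depth is diagonal. One consistency check remains: an edge could map a state to itself only if $\Phid(\bm u)<\Phid(\bm u)$. Such loops are therefore excluded, so every nonzero off-diagonal entry genuinely comes from an edge between distinct states, and the diagonal consists exactly of the entries $-r(\bm u)$. I expect no real obstacle beyond this bookkeeping.

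Finally, I would note the consequence that the lemma enables. Since $Q$ is lower triangular, its spectrum is $\{-r(\bm u):\bm u\in\V\}$ with multiplicity. The zero eigenvalues correspond to states with $r(\bm u)=0$, which are the absorbing states cited in Section~\ref{sec:fingerprints}.
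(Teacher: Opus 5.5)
Your proof is correct and follows the paper's argument: a nonzero off-diagonal entry $Q_{\bm v\bm u}$ forces an edge $\bm u\to\bm v$, hence $\Phid(\bm v)<\Phid(\bm u)$, so $\bm v$ is listed strictly after $\bm u$ in decreasing-depth order. Your separate handling of ties and self-loops simply spells out bookkeeping that the paper's one-line proof leaves implicit.
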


\begin{proof}
$Q_{\bm v\bm u}\ne0$ for $\bm v\ne\bm u$ requires $\bm u\to\bm v\in\E$, hence
$\Phid(\bm v)<\Phid(\bm u)$; in decreasing-depth order $\bm v$ is listed strictly after
$\bm u$, so its row index exceeds $\bm u$'s column index.
\end{proof}

Triangularity is worth more than it looks. A triangular matrix wears its spectrum on its
diagonal, so the relaxation rates of the whole ensemble can be read off state by state,
without ever forming --- let alone diagonalizing --- the generator.

\begin{theorem}[Spectrum from the diagonal]\label{thm:spectrum}
The eigenvalues of $Q$ are exactly $\{-r(\bm u):\bm u\in\V\}$, obtainable in
$O(|\V|+|\E|)$ arithmetic operations rather than by $O(|\V|^3)$ diagonalization.
\end{theorem}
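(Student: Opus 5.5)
The argument rests on Lemma~\ref{lem:triangular}. Fix the decreasing-depth ordering of $\V$ and let $\Pi$ be the permutation matrix that realises it. Then $\Pi Q\Pi^{\top}$ is lower triangular and has the same characteristic polynomial as $Q$. The determinant of a triangular matrix is the product of its diagonal entries, so
\begin{equation*}
\det\bigl(\lambda I-Q\bigr)=\det\bigl(\lambda I-\Pi Q\Pi^{\top}\bigr)=\prod_{\bm u\in\V}\bigl(\lambda+r(\bm u)\bigr).
\end{equation*}
This follows because the diagonal of $\Pi Q\Pi^{\top}$ is just a relisting of the diagonal entries $Q_{\bm u\bm u}=-r(\bm u)$ from Definition~\ref{def:generator}. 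Hence the eigenvalues of $Q$, counted with algebraic multiplicity, are exactly the multiset $\{-r(\bm u)\}$. I will state this multiset form explicitly, because exit-rate coincidences are allowed and will matter later; the palindrome degeneracies and the zero eigenvalue are examples. Only algebraic multiplicity is claimed, not diagonalizability.

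For the complexity statement, the key point is that no ordering or matrix is actually needed to read off the spectrum. It is enough to compute $r(\bm u)=\sum_{(m\to n)}w_{nm}(\bm u)$ for every state. Each $w_{nm}(\bm u)$ is attached to exactly one edge $(\bm u\to\bm v)\in\E$ and costs $O(1)$ to evaluate from \eqref{eq:edgerate}. A single pass over the adjacency lists accumulates all exit rates in $O(|\V|+|\E|)$ operations. The $|\V|$ term covers initialization and absorbing states with $r=0$.

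If one also wants the triangular ordering itself, a topological sort, or a bucket sort by the integer values of $\Phid$, runs in the same $O(|\V|+|\E|)$ bound. I will then contrast this with the generic $O(|\V|^3)$ cost of dense eigen-decomposition. On the triangle $|\E|\le 2\cdot N(N+1)/2$, so both $|\V|$ and $|\E|$ are $O(N^2)$. The spectrum therefore costs $O(N^2)$, against $O(N^6)$ for a dense method.

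There is no real obstacle here. The only point needing care is phrasing. The statement "eigenvalues are exactly $\{-r(\bm u)\}$" should be read as an equality of multisets, that is, of characteristic polynomials. It should not be read as saying that distinct states give distinct eigenvalues or that $Q$ is diagonalizable. The paper's later Bateman machinery with polynomial-times-exponential terms relies on exactly this distinction.
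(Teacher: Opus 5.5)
Your proposal is correct and follows the paper's own route: put the states in the decreasing-depth order of Lemma~\ref{lem:triangular} and read the characteristic polynomial $\prod_{\bm u}(\lambda+r(\bm u))$ off the diagonal of the resulting triangular matrix. Your single-pass accumulation of exit rates also justifies the $O(|\V|+|\E|)$ cost, which the paper asserts without argument, and your multiset (algebraic-multiplicity) reading of the spectrum is the right one given the degeneracies used later.
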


\begin{proof}
For a triangular matrix $\det(Q-\lambda I)=\prod_{\bm u}(Q_{\bm u\bm u}-\lambda)$, by
Lemma~\ref{lem:triangular}.
\end{proof}

\subsection{An exact, Bateman-type closed-form solution on any finite DAG}
\label{sec:bateman}

Triangularity turns the master equation into a lower-triangular ordinary differential equation (ODE) system in which
every state's population depends only on the already-solved populations of its
predecessors. This is a purely graph-theoretic fact.

\begin{theorem}[Bateman solution on any finite DAG]\label{thm:batemanDAG}
Let $G=(\V,\E)$ be a finite DAG with a depth function $\Phid$ satisfying
$\Phid(\bm v)<\Phid(\bm u)$ for every edge, exit rate $r(\bm u)=\sum_{\bm v}w_{\bm u\bm v}>0$
for each non-sink $\bm u$, edge weight $w_{\bm u\bm v}>0$, and $P_{\bm s}(0)=1$ at a
unique source $\bm s$. Then:
\begin{itemize}
\item[(i)] Any decreasing-depth ordering renders $Q$ lower triangular, with diagonal
$Q_{\bm u\bm u}=-r(\bm u)$.
\item[(ii)] $\spec(Q)=\{-r(\bm u):\bm u\in\V\}$.
\item[(iii)] $P_{\bm s}(t)=e^{-r(\bm s)t}$, and for every other $\bm u$,
\begin{equation}
P_{\bm u}(t)=\sum_{\bm p:\,\bm p\to\bm u}w_{\bm p\bm u}\int_0^t e^{-r(\bm u)(t-s)}P_{\bm p}(s)\,\dd s,
\label{eq:varofconst}
\end{equation}
solvable exactly by forward substitution in depth order. The solution has the
polynomial-exponential form $P_{\bm u}(t)=\sum_{\bm w\in\A(\bm u)}q_{\bm u,\bm w}(t)\,e^{-r(\bm w)t}$,
where $\A(\bm u)$ is the ancestor set and each $q_{\bm u,\bm w}$ is a polynomial in $t$ of
degree at most one less than the multiplicity of $r(\bm w)$ among the ancestor exit rates.
\end{itemize}
\end{theorem}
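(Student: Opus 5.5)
Parts (i) and (ii) are Lemma~\ref{lem:triangular} and Theorem~\ref{thm:spectrum} restated for an abstract DAG. The proofs there used only the strict decrease of $\Phid$ along edges and the diagonal entries $Q_{\bm u\bm u}=-r(\bm u)$, so I would simply cite them. Sinks have $r(\bm u)=0$, and they contribute the zero eigenvalues. The real work is (iii), which I would do by induction along a decreasing-depth enumeration $\bm u_1=\bm s,\bm u_2,\dots$ of $\V$.

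\textbf{Deriving the variation-of-constants formula.} Row $\bm u$ of $\dot P=QP$ reads
\[
\dot P_{\bm u}=-r(\bm u)P_{\bm u}+\sum_{\bm p\to\bm u}w_{\bm p\bm u}P_{\bm p}.
\]
By (i), every predecessor $\bm p$ precedes $\bm u$ in the ordering, so the forcing term is already known when $\bm u$ is reached. For the source, uniqueness means no edge enters $\bm s$, so $P_{\bm s}=e^{-r(\bm s)t}$. For $\bm u\neq\bm s$ we have $P_{\bm u}(0)=0$. Multiplying by the integrating factor $e^{r(\bm u)t}$ and integrating from $0$ gives \eqref{eq:varofconst}. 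This establishes existence, uniqueness and the forward-substitution algorithm.

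\textbf{Proving the polynomial-exponential form.} I would strengthen the inductive hypothesis to control exponents and degrees. Let $\A(\bm u)$ include $\bm u$ itself. For a rate value $\rho$, let $\mu_{\bm u}(\rho)$ be the number of $\bm w\in\A(\bm u)$ with $r(\bm w)=\rho$. The hypothesis is: $P_{\bm u}=\sum_\rho q_\rho(t)e^{-\rho t}$, with $\rho$ ranging over ancestor rates and $\deg q_\rho\le\mu_{\bm u}(\rho)-1$.

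For a predecessor $\bm p$ we have $\A(\bm p)\subseteq\A(\bm u)\setminus\{\bm u\}$, because $\bm u\notin\A(\bm p)$ by acyclicity. Hence $\mu_{\bm p}(\rho)\le\mu_{\bm u}(\rho)$, with the count dropping by one at $\rho=r(\bm u)$. The key computation is the elementary integral
\[
\int_0^t e^{-r(t-s)}s^k e^{-\rho s}\,\dd s .
\]
If $\rho\neq r$, it equals $e^{-\rho t}\cdot(\text{degree-}k\text{ polynomial})$ plus $e^{-rt}\cdot(\text{constant})$. If $\rho=r$, it equals $e^{-rt}t^{k+1}/(k+1)$.

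Now collect terms. A term with $\rho\neq r(\bm u)$ keeps its degree bound from $\mu_{\bm p}(\rho)\le\mu_{\bm u}(\rho)$. The term with $\rho=r(\bm u)$ gains one degree, from at most $\mu_{\bm p}(r(\bm u))-1\le\mu_{\bm u}(r(\bm u))-2$ up to $\mu_{\bm u}(r(\bm u))-1$. New constants multiplying $e^{-r(\bm u)t}$ have degree $0$, which is allowed since $\mu_{\bm u}(r(\bm u))\ge1$.

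\textbf{Expected obstacle.} The main obstacle is bookkeeping rather than analysis. Different predecessors have overlapping ancestor sets, so the degree bound must be stated per distinct rate value, with $\mu$ counted over the union $\A(\bm u)$. If it were stated per ancestor node, summing over several parents would double-count. Phrasing the hypothesis via $\mu_{\bm u}(\rho)$ handles this. It also shows that the paper's indexing by $\bm w\in\A(\bm u)$ should be read as grouping all ancestors sharing a rate value into a single polynomial coefficient. Once that is fixed, the induction closes directly.
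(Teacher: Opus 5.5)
Your proposal is correct and follows the paper's own route. Parts (i) and (ii) are inherited from Lemma~\ref{lem:triangular} and Theorem~\ref{thm:spectrum}; for (iii) you use the integrating-factor formula, solve it by forward substitution in depth order, and obtain the polynomial-exponential form by induction on depth with the integral $J_k$ of Lemma~\ref{lem:Jk}, exactly as in the proof of Theorem~\ref{thm:batemangen}. Your rate-multiplicity bookkeeping $\mu_{\bm u}(\rho)$, together with $\A(\bm p)\subseteq\A(\bm u)\setminus\{\bm u\}$, makes explicit the degree bound that the paper states but only sketches; the double-counting worry is unnecessary, since summing over parents only takes a maximum of degrees.
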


The classical 1910 result of Bateman~\cite{Bateman1910} is the special case in which the
graph is a simple chain. For the two-level Dicke chain the corresponding
exact solution -- including the degenerate case treated in Section~\ref{sec:degeneracy}
below -- was obtained by Lee~\cite{LeeI,LeeII}. His time-evolution matrix is precisely the
chain propagator of Theorem~\ref{thm:batemanDAG}(iii), and the double poles of his Laplace
transform are exactly the palindromic coincidences of Section~\ref{sec:palindrome}. The
novelty here is therefore not the chain solution, but its transfer to graphs whose interior
states have several parents. On the SU(3) triangle each interior state has exactly two parents,
one per channel. The ancestor set consequently grows as $O(N^2)$, and the recursion costs
$O(|\A(\bm u)|)$ per state.

\subsection{Removing the degeneracy restriction}
\label{sec:degeneracy}

The non-degeneracy condition (all ancestor exit rates distinct) fails exactly where two
ancestors in the same chain share the same exit rate; the naive coefficient
$1/(r(\bm u)-r(\bm w))$ then diverges, and the correct response is to raise the
polynomial degree. The key technical ingredient is a single closed-form evaluation
covering both cases.

\begin{lemma}[Polynomial-exponential integral]\label{lem:Jk}
For $a\in\mathbb{R}$, $k\in\mathbb{N}_0$, and $t\ge0$,
\begin{equation}
J_k(a,t):=\int_0^t s^k e^{as}\,\dd s=
\begin{cases}
\displaystyle e^{at}\sum_{j=0}^{k}(-1)^j\frac{k!}{(k-j)!}\frac{t^{k-j}}{a^{j+1}}
+(-1)^{k+1}\frac{k!}{a^{k+1}}, & a\ne0,\\[2.2ex]
\displaystyle \frac{t^{k+1}}{k+1}, & a=0.
\end{cases}
\label{eq:Jk}
\end{equation}
\end{lemma}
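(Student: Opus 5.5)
The case $a=0$ needs nothing beyond the power rule: $\int_0^t s^k\,\dd s=t^{k+1}/(k+1)$. For $a\ne0$ I will argue by induction on $k$, using integration by parts. The base case $k=0$ gives $J_0=(e^{at}-1)/a$. This agrees with the formula: the sum has the single term $e^{at}/a$, and the constant is $(-1)^{1}\,0!/a=-1/a$.

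For the inductive step, integrate by parts with $u=s^k$ and $\dd v=e^{as}\dd s$. This yields the recursion
\begin{equation*}
J_k(a,t)=\frac{t^k e^{at}}{a}-\frac{k}{a}\,J_{k-1}(a,t),\qquad k\ge1 .
\end{equation*}
Next, substitute the induction hypothesis for $J_{k-1}$. In the exponential part, the term indexed $j$ of $J_{k-1}$ becomes, after multiplication by $-k/a$,
\begin{equation*}
(-1)^{j+1}\frac{k\,(k-1)!}{(k-1-j)!}\,\frac{t^{k-1-j}}{a^{j+2}} .
\end{equation*}
Reindexing with $j'=j+1$ turns this into exactly the $j'$-th term of the claimed sum, for $j'=1,\dots,k$. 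The boundary term $t^ke^{at}/a$ supplies the missing $j'=0$ term. For the constant, $-\tfrac{k}{a}\cdot(-1)^{k}\tfrac{(k-1)!}{a^{k}}$ equals $(-1)^{k+1}k!/a^{k+1}$, as required.

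As an independent check, and as a way to present the result without induction, I would verify it by differentiation. Call the right-hand side $F(t)$. Differentiating the sum term by term, the derivative of $e^{at}$ at index $j$ cancels the derivative of $t^{k-j}$ at index $j-1$, and the only survivor is $t^ke^{at}$. Also $F(0)=(-1)^k k!/a^{k+1}+(-1)^{k+1}k!/a^{k+1}=0$. Together these give $F=J_k$ by the fundamental theorem of calculus.

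I do not expect a real obstacle here. The only delicate point is the index bookkeeping in the reindexing step: the factorial ratio $k!/(k-j)!$ and the sign $(-1)^j$ must shift together. I would settle this by writing out the $j'$-th term explicitly rather than manipulating the sum wholesale.
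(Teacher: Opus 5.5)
Your proposal is correct. Both the integration-by-parts induction and the differentiation check are complete, including the index shift $j'=j+1$ and the constant term $(-1)^{k+1}k!/a^{k+1}$. The paper states this lemma without proof, treating it as a standard calculus identity, so your argument supplies exactly the routine verification it leaves implicit. Either of your two routes alone suffices, and the differentiation argument (with $F(0)=0$ coming only from the $j=k$ term) is the shorter one.
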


With $J_k$ in hand the forward substitution can be carried out symbolically, and the two
branches of~\eqref{eq:Jk} become the two branches of the recursion: a coincidence is not a
special case to be avoided but simply the $a=0$ line, which promotes $t^k$ to $t^{k+1}$.

\begin{theorem}[General Bateman closed form, no degeneracy restriction]\label{thm:batemangen}
Set $P_{\bm s}(t)=e^{-r(\bm s)t}$ at the source. For every other state $\bm u$, with
incoming channel edges from parents $\bm p$ whose populations are already known as
$P_{\bm p}(t)=\sum_{\bm w\in\A(\bm p)}Q_{\bm p,\bm w}(t)e^{-r(\bm w)t}$,
$Q_{\bm p,\bm w}(t)=\sum_{k=0}^{d_{\bm p,\bm w}}c_{\bm p,\bm w,k}t^k$, the population of
$\bm u$ is $P_{\bm u}(t)=\sum_{\bm w\in\A(\bm u)}Q_{\bm u,\bm w}(t)e^{-r(\bm w)t}$, with
coefficients obtained from every parent $\bm p$ and every term $(\bm w,k)$ by:
if $r(\bm w)\ne r(\bm u)$ (set $a=r(\bm u)-r(\bm w)$),
\begin{align}
c_{\bm u,\bm w,l}&\mathrel{+}=w_{\bm p\bm u}\,c_{\bm p,\bm w,k}\,(-1)^{k-l}\frac{k!}{l!\,a^{\,k-l+1}},
\quad l=0,\dots,k,\\
c_{\bm u,\bm u,0}&\mathrel{+}=w_{\bm p\bm u}\,c_{\bm p,\bm w,k}\,(-1)^{k+1}\frac{k!}{a^{\,k+1}};
\end{align}
if $r(\bm w)=r(\bm u)$ exactly,
\begin{equation}
c_{\bm u,\bm u,k+1}\mathrel{+}=\frac{w_{\bm p\bm u}\,c_{\bm p,\bm w,k}}{k+1},
\end{equation}
so that every exact rate coincidence encountered along a chain raises the polynomial
degree of that exponential by exactly one. This recursion is exact for every state of
every such graph, regardless of the coincidence pattern among $\{r(\bm w):\bm w\in\A(\bm u)\}$.
\end{theorem}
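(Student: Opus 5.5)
The plan is to argue by strong induction along any decreasing-depth ordering of $\V$, which exists by Lemma~\ref{lem:triangular}. By Theorem~\ref{thm:batemanDAG}(iii), every population obeys the variation-of-constants formula~\eqref{eq:varofconst}. That formula involves only parents $\bm p$ of $\bm u$, and each parent has strictly larger depth. So when we reach $\bm u$, every $P_{\bm p}$ may be assumed to already have the stated polynomial-exponential form.

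The base case is the source, where $P_{\bm s}(t)=e^{-r(\bm s)t}$, i.e.\ $c_{\bm s,\bm s,0}=1$. For the inductive step, substitute the parent expansions into~\eqref{eq:varofconst}. Since the right-hand side is linear in $P_{\bm p}$, it suffices to treat a single term $w_{\bm p\bm u}c_{\bm p,\bm w,k}\,s^k e^{-r(\bm w)s}$ and then sum over all parents and all terms $(\bm w,k)$. This linearity is exactly what justifies the accumulate-with-$\mathrel{+}=$ form of the recursion. Each single term contributes
\begin{equation*}
w_{\bm p\bm u}c_{\bm p,\bm w,k}\,e^{-r(\bm u)t}\int_0^t s^k e^{(r(\bm u)-r(\bm w))s}\,\dd s
= w_{\bm p\bm u}c_{\bm p,\bm w,k}\,e^{-r(\bm u)t}J_k(a,t),
\end{equation*}
with $a=r(\bm u)-r(\bm w)$. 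This is evaluated by Lemma~\ref{lem:Jk}.

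There are two cases.
\begin{itemize}
\item[(a)] If $a\neq0$, multiply the first branch of~\eqref{eq:Jk} by $e^{-r(\bm u)t}$. The factor $e^{at}e^{-r(\bm u)t}=e^{-r(\bm w)t}$ keeps the sum attached to the exponential of $\bm w$. Reindexing with $l=k-j$ turns $(-1)^j\frac{k!}{(k-j)!}t^{k-j}a^{-(j+1)}$ into $(-1)^{k-l}\frac{k!}{l!}\,t^l a^{-(k-l+1)}$, which is the first update rule. The constant term $(-1)^{k+1}k!/a^{k+1}$ multiplies $e^{-r(\bm u)t}$, which gives the update to $c_{\bm u,\bm u,0}$.
\item[(b)] If $a=0$, the second branch gives $t^{k+1}/(k+1)$ times $e^{-r(\bm u)t}$. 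This is the third rule. Because $r(\bm w)=r(\bm u)$, the exponentials of $\bm w$ and $\bm u$ coincide, so the term is recorded under label $\bm u$ with degree raised by one.
\end{itemize}

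The step I expect to be the main obstacle is the bookkeeping of labels when rates coincide, rather than any analytic difficulty. The expansion is indexed by ancestors $\bm w$, but only the value $r(\bm w)$ matters. I would therefore make the convention explicit: terms with equal rates are merged, for instance by indexing by distinct rate values. With that convention, "$c_{\bm u,\bm u,\cdot}$" denotes the coefficient of $e^{-r(\bm u)t}$, and the rules in cases (a) and (b) never conflict. Under this convention each case of Lemma~\ref{lem:Jk} is an exact identity. No limiting argument is needed: the $a=0$ branch is used directly whenever equality holds, not obtained as a limit of $a\to0$, so the recursion is exact for every coincidence pattern.

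Finally, a short induction gives the degree statement. In case (a) the degree is preserved, and in case (b) it grows by one exactly when the rate is met again. Hence the degree attached to each rate is at most its multiplicity minus one, consistent with Theorem~\ref{thm:batemanDAG}(iii).
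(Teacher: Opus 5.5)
Your proposal is correct and follows the paper's proof: induction in decreasing-depth order, substitution of the parent expansions into~\eqref{eq:varofconst}, reduction of each term to $e^{-r(\bm u)t}J_k(a,t)$, and reading off the $a\ne0$ branch (via your reindexing $l=k-j$) and the $a=0$ branch of Lemma~\ref{lem:Jk}. Your merging of terms by rate value, which makes the label $c_{\bm u,\bm u,\cdot}$ unambiguous, and your short induction for the degree bound spell out bookkeeping that the paper's three-line proof leaves implicit.
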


\begin{proof}
Substitute the known form of $P_{\bm p}(s)$ into~\eqref{eq:varofconst}; each term reduces
to $c_{\bm p,\bm w,k}e^{-r(\bm u)t}J_k(a,t)$ with $a=r(\bm u)-r(\bm w)$ and $J_k$ as in
Lemma~\ref{lem:Jk}. The case $a\ne0$ gives the first two lines by expanding $e^{at}$; the
case $a=0$ gives the third directly. Induction on depth completes the recursion.
\end{proof}

Theorem~\ref{thm:batemangen} does not need to know \emph{why} a coincidence occurs, only
that the rate gap is exactly zero; it accommodates any number of coincidences along a
single chain. Along a one-dimensional chain this reproduces the secular
($t\,e^{-\lambda t}$) terms that Lee~\cite{LeeI,LeeII} obtained for the two-level Dicke
master equation from the double poles of its Laplace transform; the statement here is the
multi-parent-DAG generalization. A practical caution: the coincidence
test must be tied to the working precision, since a fixed relative tolerance misclassifies
genuinely distinct but closely spaced rates as exact coincidences -- a discrete branching
error, not a roundoff one, diagnosed by its failure to shrink under additional precision.

\subsection{Counting complete decay pathways}
\label{sec:paths}

The generator governs how fast the population moves; the graph also fixes how many routes it
has to choose among. This count is purely combinatorial --- it depends on the channel set and
the initial state, not on the rates --- and it separates the three topologies as sharply as
anything dynamical does.

\begin{theorem}[Pathway counts]\label{thm:paths}
The number of complete directed paths from the physical initial state to the absorbing
set is $C_N=\binom{2N}{N}/(N+1)$ for the Ladder (from $\ket{N,0,0}$), $2^N$ for the
$\Lambda$ (from $\ket{N,0,0}$), and $\binom{N}{a}$ for the Vee (from $\ket{a,b,0}$).
\end{theorem}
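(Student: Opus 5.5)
Each count comes from encoding a complete path as a word in the channel labels. A path is determined by its ordered sequence of channel firings. It is admissible exactly when every firing $m\to n$ occurs at a state with $q_m\ge1$, because the edge rate~\eqref{eq:edgerate} vanishes precisely when $q_m=0$ and $\Gamma_{nm}>0$. The bosonic factor $(q_n+1)$ never vanishes, so it imposes no constraint. The path is complete exactly when it reaches the absorbing set identified in Section~\ref{sec:fingerprints}, where every emitting level is empty. The uniform leveling of Section~\ref{sec:leveling} fixes the length of every complete path: the initial depth is $\Phi_{\bm c}=2N$ for the Ladder, $N$ for the $\Lambda$, and $a+b=N$ for the Vee, while absorbing states have depth $0$. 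Counting paths therefore reduces to counting admissible words of fixed length and fixed letter content.

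For the $\Lambda$ I start at $\ket{N,0,0}$. Each step removes one atom from level $1$ and sends it to level $2$ or to level $3$. The path terminates exactly when $q_1=0$, which is after $N$ steps. Every step is admissible as long as $q_1\ge1$, so each of the $N$ steps is a free binary choice. Distinct words give distinct paths because the two channels move to distinct successor states. This yields $2^N$.

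For the Vee I start at $\ket{a,b,0}$. The word contains exactly $a$ letters $1{\to}3$ and $b$ letters $2{\to}3$. The two channels act on disjoint reservoirs, and neither refills the other. Hence the only constraint is that a reservoir cannot fire once it is empty, and that constraint is automatic for a word with exactly $a$ and $b$ letters of each kind. The count is the number of interleavings, $\binom{a+b}{a}=\binom{N}{a}$.

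The Ladder is the main step. Let $U$ denote a $1{\to}2$ firing and $D$ a $2{\to}3$ firing. A complete path from $\ket{N,0,0}$ to $\ket{0,0,N}$ contains exactly $N$ letters of each kind. After any prefix containing $i$ letters $U$ and $j$ letters $D$, the state is $q_2=i-j$. A $D$ step is admissible iff $q_2\ge1$ just before it, which means every prefix satisfies $\#D\le\#U$. A $U$ step is admissible iff $q_1=N-i\ge1$, and this holds automatically because the word has only $N$ letters $U$. The admissible words are therefore exactly the Dyck words of semilength $N$. By the standard reflection argument there are $C_N=\binom{2N}{N}/(N+1)$ of them.

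The care needed in this last step is to confirm that the state-to-word correspondence is bijective. Injectivity holds because the successor state determines which channel fired. Surjectivity onto the Dyck set holds because the rate-positivity condition matches the prefix condition exactly.
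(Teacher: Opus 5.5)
Your proposal is correct and follows the paper's own proof. Both encode each complete path as its word of channel labels, identify the Ladder words with Dyck words through the prefix condition $q_2\ge0$ (the paper's $q_1\le q_1+q_2$), and count free binary choices for the $\Lambda$ and interleavings for the Vee. You also make explicit what the paper leaves implicit: admissibility from positivity of the edge rate, path length fixed by the leveling, and injectivity of the encoding.

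One small slip: your framing claim of ``fixed letter content'' does not hold for the $\Lambda$, where the $1{\to}2$/$1{\to}3$ split varies with the endpoint $\ket{0,k,N-k}$. Your $\Lambda$ paragraph never relies on that claim, so the count is unaffected.
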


\begin{proof}
\emph{Ladder}: encode a path as a length-$2N$ word over $\{1,2\}$ recording the channel
used; the lattice constraint that $\ket q$ always has $q_1\le q_1+q_2$ translates into the
ballot condition (every prefix has at least as many $1$'s as $2$'s), a Dyck word, counted
by $C_N$~\cite{Stanley2015}. \emph{$\Lambda$}: each of the $N$ departures from level $1$
chooses target $2$ or $3$ freely, giving $2^N$ words, of which $\binom{N}{k}$ end at the
dark state $\ket{0,k,N-k}$. \emph{Vee}: a path is an interleaving of the $a$ level-$1$ and
$b$ level-$2$ departures, $\binom{N}{a}$ orders.
\end{proof}

The Ladder count grows exponentially, $C_N\sim4^N/(N^{3/2}\sqrt\pi)$, while the edge count
$|\E|$ grows only quadratically: the burst's funneling into a compact set of states is a
non-trivial feature of the cooperative rate structure, not a consequence of there being
few routes.

\subsection{Palindrome-forced spectral degeneracies}
\label{sec:palindrome}

Theorem~\ref{thm:spectrum} makes the spectrum easy to compute but says nothing about whether
its entries are distinct, and Theorem~\ref{thm:batemangen} has just shown that coincidences
are what generate secular terms. It is therefore worth asking where coincidences are forced by
the geometry rather than arriving by accident. The answer, on the boundary, is a reflection
symmetry of the exit rate.

\begin{lemma}[Row palindrome]\label{lem:palindrome}
Let $L_{ab}$ be a boundary edge, the states with $q_a+q_b=N$ and all other $q_i=0$. If it
supports exactly one active channel $a\to b$ with level $b$ dark, the total exit rate
reduces to $r(q_a)=\Gamma_{ba}\,q_a(N-q_a+1)$, palindromic under $q_a\mapsto N+1-q_a$.
\end{lemma}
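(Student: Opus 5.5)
The plan is a direct computation of the exit rate on the boundary edge $L_{ab}$, followed by a one-line symmetry check. Since $r(\bm u)=\sum_{(m\to n)\in\C} w_{nm}(\bm u)$ by Definition~\ref{def:generator}, I will first list which channels carry a nonzero rate $w_{nm}(\bm q)=\Gamma_{nm}q_m(q_n+1)$ at a state of $L_{ab}$. On this edge only $q_a$ and $q_b$ can be positive, and every other occupation vanishes. A channel $m\to n$ therefore contributes only if its emitting level satisfies $m\in\{a,b\}$ with $q_m\ge1$. The hypothesis says that level $b$ is dark, meaning no channel of $\C$ emits from $b$. It also says that $a\to b$ is the only active channel emitting from $a$ along the edge.

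I read ``active'' as ``having nonzero rate on $L_{ab}$''. For instance, in the Ladder on $L_{12}$ the channel $2\to3$ emits from level $2$, so that case is excluded by darkness. I will check this reading against the table's family counts: one family for the Ladder, namely $L_{23}$; for the Vee, $L_{13}$ and $L_{23}$; and none for $\Lambda$, since from level $1$ both channels are active on $L_{12}$ and $L_{13}$, so no edge qualifies. This bookkeeping step is where care is needed. One must confirm that any channel emitting from $a$ toward a third level $c$ is excluded by the ``exactly one active channel'' hypothesis, rather than silently vanishing. Such a channel has $q_a\ge1$ and $q_c+1=1$, so its rate is nonzero and it cannot vanish on its own.

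With the sum reduced to the single term, I get $r(q_a)=\Gamma_{ba}\,q_a(q_b+1)$. Substituting $q_b=N-q_a$ gives $r(q_a)=\Gamma_{ba}\,q_a(N-q_a+1)$. Finally I verify the palindrome. Setting $q_a'=N+1-q_a$ gives $N-q_a'+1=q_a$, so $r(q_a')=\Gamma_{ba}(N+1-q_a)\,q_a=r(q_a)$. The map swaps the two factors of the product, and this holds for every $q_a\in\{1,\dots,N\}$.

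As a remark for later use, I will note two further facts. The involution pairs $q_a$ with $N+1-q_a$ within $\{1,\dots,N\}$, and it has a fixed point only when $N$ is odd. This yields $\lfloor N/2\rfloor$ coincident pairs, independently of $\Gamma_{ba}$. The state $q_a=0$, with $r=0$, lies outside the palindromic range.
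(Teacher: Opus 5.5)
Your proof is correct and matches the paper's argument. The paper gives no separate proof: it treats the lemma as immediate from substituting $q_b=N-q_a$ into the single surviving edge rate $w_{ba}(\bm q)=\Gamma_{ba}q_a(q_b+1)$, noting that $q_a\mapsto N+1-q_a$ swaps the two factors. Your extra bookkeeping is a sound consistency check: a channel $a\to c$ cannot vanish on $L_{ab}$, so your reading of ``active'' agrees with the corollary's criterion $\out(a)=\{b\}$, $\out(b)=\varnothing$. For the $\Lambda$ you could also note that $L_{23}$ fails simply because it carries no active channel.
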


\begin{corollary}[Forced spectral degeneracy]\label{cor:degeneracy}
By Theorem~\ref{thm:spectrum} the palindrome of Lemma~\ref{lem:palindrome} forces
$\lfloor N/2\rfloor$ exact eigenvalue pairs of $Q$, for every $N$ and every rate value.
The criterion ($\out(a)=\{b\}$, $\out(b)=\varnothing$) selects one such family for the
Ladder (its row $q_1=0$), two for the Vee (both edges $L_{13}$ and $L_{23}$), and none
for the $\Lambda$ (whose absorbing edge $q_1=0$ carries two active channels).
\end{corollary}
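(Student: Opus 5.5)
The argument has two parts: the pairing itself on a single boundary edge, and the verification of the criterion $\out(a)=\{b\}$, $\out(b)=\varnothing$ for each topology.

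\emph{Pairing.} On $L_{ab}$ the states are indexed by $q_a=0,1,\dots,N$ with $q_b=N-q_a$. Because $b$ is dark and $a\to b$ is the only channel active on $L_{ab}$, Lemma~\ref{lem:palindrome} gives $r(q_a)=\Gamma_{ba}q_a(N-q_a+1)$. This value is invariant under $q_a\mapsto N+1-q_a$, an involution of $\{1,\dots,N\}$. A fixed point exists only if $2q_a=N+1$, which requires $N$ odd. The involution therefore splits $\{1,\dots,N\}$ into exactly $\lfloor N/2\rfloor$ two-element orbits, plus one fixed point when $N$ is odd. The state $q_a=0$ is the vertex with exit rate $0$ and is left out of the pairing.

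By Theorem~\ref{thm:spectrum}, each state contributes its own diagonal entry $-r(\bm u)$ to $\spec(Q)$ with algebraic multiplicity counted state by state. Each orbit $\{q_a,N+1-q_a\}$ consists of two distinct states with equal exit rates, so it yields an eigenvalue pair. This holds identically in $N$ and $\Gamma_{ba}$, because it is a polynomial identity and not a numerical coincidence.

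One point needs care. The exit rate must really be just the $a\to b$ term: the other channels of the topology must have zero rate on $L_{ab}$. Any channel out of the third level $c$ vanishes because $q_c=0$ makes $w\propto q_c=0$. A channel out of $b$ would contribute $\Gamma\,q_b(\cdot)$, which is nonzero whenever $q_b\ge1$; this is exactly why darkness of $b$ is required. Channels out of $a$ other than $a\to b$ are excluded by $\out(a)=\{b\}$. So the hypothesis $\out(a)=\{b\}$, $\out(b)=\varnothing$ restricted to the active channels is precisely what makes the lemma applicable.

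\emph{Topologies.} Next I check each channel set in~\eqref{eq:channelsets} over the three boundary edges $L_{12},L_{13},L_{23}$.
\begin{itemize}
\item \textbf{Ladder.} On $L_{23}$ (the row $q_1=0$) the active channel is $2\to3$ and level $3$ is dark. On $L_{12}$ both $1\to2$ and $2\to3$ are active, since $q_2(q_3+1)=q_2\ne0$. On $L_{13}$ only $1\to2$ is active, with rate $\Gamma_{21}q_1(q_2+1)=\Gamma_{21}q_1$. This rate is linear, not palindromic, and $2$ is not the occupied partner level. So the Ladder gives exactly one family.
\item \textbf{Vee.} On $L_{13}$ only $1\to3$ is active, and on $L_{23}$ only $2\to3$ is active; in both cases level $3$ is dark. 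On $L_{12}$ both channels act. This gives two families.
\item \textbf{$\Lambda$.} On $L_{23}$ there is no active channel at all, so every rate is $0$. These are the absorbing states, which belong to the kernel count and not to the pairing. On $L_{12}$ and $L_{13}$ both channels out of level $1$ are active, which violates $\out(a)=\{b\}$. This gives none.
\end{itemize}

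The main subtlety is making "supports exactly one active channel" precise. The Ladder's $L_{13}$ edge shows why this matters: a channel can be active there while the target level lies off the edge, and then the exit rate is not of palindromic form. I would therefore phrase the criterion as requiring that the active channel connect the two levels of the edge, with the source's only outgoing channel being that one, and with the target dark.
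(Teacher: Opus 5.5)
Your proposal is correct and follows the paper's route: Theorem~\ref{thm:spectrum} turns equal exit rates into coinciding eigenvalues, Lemma~\ref{lem:palindrome} supplies the reflection $q_a\mapsto N+1-q_a$, and an edge-by-edge check fixes the families per topology, with the orbit count on $\{1,\dots,N\}$ and the vanishing of off-edge channels made explicit where the paper leaves them implicit. Your $\Lambda$ case is also stated more precisely than the paper's parenthetical: on the absorbing edge $q_1=0$ no channel is actually active, and the real obstruction is $\out(1)=\{2,3\}$ acting on $L_{12}$ and $L_{13}$, exactly as you say.
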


On the single-channel boundary edge this palindrome is the $s_i=s_{N-i-1}$ pole coincidence
identified by Lee~\cite{LeeI,LeeII} for the two-level chain; the content of
Corollary~\ref{cor:degeneracy} is the graph-theoretic criterion that says which boundary
edges of which topology inherit it. The boundary palindromes are not, however, the only
rate-independent coincidences. Away from the boundary two exit rates can coincide \emph{term
by term} -- each channel's edge factor matching separately -- so that the tie holds for
\emph{every} value of the rates, rational or not. For the $\Lambda$
[$r=q_1(\Gamma_{21}(q_2{+}1)+\Gamma_{31}(q_3{+}1))$] such ties pair states across the
inter-row reflection $q_1\mapsto N{+}2{-}q_1$; the smallest case,
$r(N,0,0)=r(2,\tfrac N2{-}1,\tfrac N2{-}1)$ for even $N\ge4$, involves the source itself, so
generic-$\Lambda$ populations below the partner state carry degree-one secular terms at
\emph{all} rates. For the Vee the analogous ties pair emission shells across the Dicke
reflection $q_3\mapsto N{-}1{-}q_3$ (at $N=14$: five $\Lambda$ ties; $8$ interior Vee ties in
addition to the $2\lfloor N/2\rfloor=14$ boundary ones). These reflection-forced interior
ties, together with any further accidental coincidences at special rational rate ratios, are
handled without exclusion by the generalized recursion of Theorem~\ref{thm:batemangen}; their
systematic classification is developed in a companion note.

\section{The general recursion on all three topologies}
\label{sec:recursion}

Every one of the three graphs is acyclic and depth-ordered. The generator is therefore lower
triangular and its spectrum is $\{-r(\bm q)\}$, so Theorem~\ref{thm:batemangen} applies:
every population is a finite sum
\begin{equation}
p(\bm q,t)=\sum_{\bm w\in\A(\bm q)}Q_{\bm q,\bm w}(t)\,e^{-r(\bm w)t},
\label{eq:polyexp}
\end{equation}
over the ancestor set, with polynomial coefficients computed by forward substitution in
depth order. On the triangle each interior state has exactly two parents:
$(q_1{+}1,q_2{-}1,q_3)$ and $(q_1,q_2{+}1,q_3{-}1)$ for the Ladder;
$(q_1{+}1,q_2{-}1,q_3)$ and $(q_1{+}1,q_2,q_3{-}1)$ for the $\Lambda$;
$(q_1{+}1,q_2,q_3{-}1)$ and $(q_1,q_2{+}1,q_3{-}1)$ for the Vee.

For the Ladder the forced palindrome ties on the row $q_1=0$ (Lemma~\ref{lem:palindrome})
produce degree-one polynomial terms. For the $\Lambda$ and Vee most ancestor
chains are coincidence-free (pure exponential sums), but even generic rates meet the
isolated reflection-forced ties of Section~\ref{sec:palindrome} along some chains, each
raising a polynomial degree by one; further accidental coincidences appear at special
rational rate ratios. The recursion is indifferent to all of these: it detects ties by
exact rate comparison and absorbs them into polynomial degrees. We verified
the recursion exhaustively at $N=6$ (all $28$ states, $120$-digit working precision, exact
rational rates) against direct integration of~\eqref{eq:master}: maximum deviation
$2.9\times10^{-12}$ (Ladder, generic), $2.3\times10^{-12}$ ($\Lambda$),
$1.1\times10^{-12}$ (Vee), and $8.1\times10^{-13}$, $1.7\times10^{-12}$ for the maximally
degenerate symmetric $\Lambda$ and Vee obtained by setting the branch rates equal -- in
every case the ODE solver's own tolerance.

\section{Conclusions and outlook}
\label{sec:conclusions}

We have introduced the time-resolved edge-flux (TREF) framework and developed it, on a
single footing, for all three dipole-allowed three-level superradiance configurations --
Ladder, $\Lambda$, and Vee -- as edge-flux networks on the SU(3) triangular lattice. Node-wise
flux conservation is the master equation itself (Proposition~\ref{prop:conservation}), and the
total cooperative rate is the sum of all instantaneous edge fluxes
(Proposition~\ref{prop:rates}). All three graphs are uniformly leveled, so the depth-descent
identity (Theorem~\ref{thm:descent}) and the exact energy identity
$I(t)=-\dd\langle E\rangle/\dd t$ (Theorem~\ref{thm:energy}) hold verbatim for each. Their
structural fingerprints nevertheless separate them cleanly: stationary manifold dimensions
$1/N{+}1/1$, palindrome-degeneracy families $1/0/2$, and pathway counts
$C_N/2^N/\binom{N}{a}$ (Table~\ref{tab:fingerprints}).

The depth-leveled acyclic structure then delivers the algebra. The relaxation spectrum comes
from the state exit rates alone (Theorem~\ref{thm:spectrum}), read off a generator that
depth ordering renders triangular (Lemma~\ref{lem:triangular}). Every state population follows
in closed form on an arbitrary finite DAG (Theorem~\ref{thm:batemanDAG}), and the
polynomial-times-exponential recursion of Theorem~\ref{thm:batemangen} removes any restriction
on exit-rate coincidences: a coincidence is not an obstruction to be excluded by hypothesis but
a branch of the recursion, which raises a polynomial degree by one and continues. The complete
decay pathways are counted exactly (Theorem~\ref{thm:paths}), and the boundary palindromes of
Lemma~\ref{lem:palindrome} fix $\lfloor N/2\rfloor$ eigenvalue pairs of $Q$ for every $N$ and
every rate value.

\section*{Data availability}
No experimental data were generated. Every identity and the general recursion have been
checked against direct propagation of the master equation~\eqref{eq:master}. The recursion of
Theorem~\ref{thm:batemangen} reproduces $120$-digit propagation to $2\times10^{-117}$ across all
$28$ states at $N=6$, for five distinct rate patterns, and the closed form agrees with direct
integration over the whole $(N,t)$ plane up to $N=50$. The codes are available
from the authors upon reasonable request: the SU(3) generator builder, the
forward-substitution implementation of Theorem~\ref{thm:batemangen}, and the
extended-precision comparison scripts.

\section*{Acknowledgments}
The authors thank prof. Dr. T.~Begzjav and prof. Dr. Ts. Kottos for fruitful discussions.

\end{document}